\newtheorem{proposition}{Proposition}
\theoremstyle{definition}
\newtheorem{definition}{Definition}
\theoremstyle{remark}
\newtheorem{example}{Example}
\newcommand{\defeq}{\mathrel{\mathop :} =}
\newcommand{\E}{\,1\,}
\newcommand{\R}{\,\rm R\,}\newcommand{\B}{\,\rm B\,}
\newcommand{\C}{\,\rm C\,}\newcommand{\M}{\,\rm M\,}\newcommand{\Y}{\,\rm Y\,}
\renewcommand{\O}{\,0\,}
\title{A Monoid Ring Approach\\ to Color Visual Cryptography}
\author{\IEEEauthorblockN{Maximilian Reif and Jens Zumbrägel}
  \IEEEauthorblockA{%
    \textit{Faculty of Computer Science and Mathematics}\\
    \textit{University of Passau}\\
    Innstraße 33, 94032 Passau, Germany}\\
    \{maximilian.reif, jens.zumbraegel\}@uni-passau.de}
\begin{document}

\maketitle

\begin{abstract}
  A visual cryptography scheme is a secret sharing scheme in which
  the secret information is an image and the shares are printed on
  transparencies, so that the secret image can be recovered by simply
  stacking the shares on top of each other.  Such schemes do therefore
  not require any knowledge of cryptography tools to recover the
  secret, and they have widespread applications, for example, when
  sharing QR codes or medical images.

  In this work we deal with visual cryptography threshold schemes for
  color images.  Our color model differs from most previous work by
  allowing arbitrary colors to be stacked, resulting in a possibly
  different color.  This more general color monoid model enables us to
  achieve shorter pixel expansion and higher contrast than comparable
  schemes.  We revisit the polynomial framework of Koga and Ishihara
  for constructing visual cryptography schemes and apply the monoid
  ring to obtain new schemes for color visual cryptography.
\end{abstract}

\section{Introduction}

Visual cryptography introduced by Naor and Shamir~\cite{NS94} allows
to share a secret image among~$n$ participants, so that at
most~$t \!-\! 1$ participants gain no information about the image
while any~$t$ participants recover the image visually by stacking
their~$t$ shares one upon the other.  Originally devised for black and
white pictures, there has also been interest in color versions of this
problem from early on.

In color visual cryptography there are various models of how different
colors are combined when stacked together.  In the model by Verheul
and Tilborg~\cite{VT97} different colors other than black are not
allowed to be stacked, while the composition of a color with black
results in black, and with the same color returns this color.  A more
general model by Koga and Yamamoto~\cite{KY98} describes colors as
elements in a lattice with the combination of colors is given by the
join in this lattice.  Some recent works on constructing visual
cryptography schemes with algebraic methods include papers by
Adhikari~\cite{A14}, Dutta, Adhikari and Ruj~\cite{DAR19}, Dutta et
al.~\cite{D+20}, and Koga~\cite{K22}.

Visual cryptography schemes are usually described by certain
$n \times q$-matrices referred to as basis matrices, one for each
possible color of a pixel, where~$q$ denotes the pixel expansion.
Every pixel is expanded in the $i$-th share as~$q$ subpixels
represented by the $i$-th row of the corresponding basis matrix,
applying a random column permutation.  Koga and Ishihara~\cite{KI11}
introduced an algebraic framework how to construct basis matrices.  It
is based on degree~$n$ homogeneous multivariate polynomials in the
colors as variables, which are expanded into basis matrices by
permutations of the corresponding columns.  The hiding property when
restricted to $t \!-\! 1$ rows is described elegantly by a
differential operator.

In this work we revisit the framework of~\cite{KI11} and apply it to a
color model based on a general monoid.  This way we can describe the
recovery process for the secret image by the operation in a monoid
ring, thereby enhancing the algebraic model.  We hope that our monoids
provide a more realistic model for color visual cryptography and
illustrate this by concrete examples.

\subsubsection*{Motivation for the extended color monoid model}

By making full use of the monoid operations we increase the
possibilities for the colors in the shares.  For example, if the
original image color is blue this may be stacked either by blue and
white, or by cyan and magenta.

These additional possibilities seem to have received little attention
in previous work.  However, they should enable better, more realistic
results for the image reconstruction by providing lower pixel
expansion and higher contrast.  The current work aims to investigate
these possibilities.

\subsubsection*{Outline of the paper}

In Section~\ref{sec:algebraic} we introduce the color monoid model and
recall the algebraic framework of Koga and Ishihara, which describes
visual cryptography schemes using multivariate polynomials.  In
Section~\ref{sec:construction} we present construction of basis
matrices for various $(t, n)$-threshold schemes that make use of the
full extended color monoid model.  Finally, in Section~%
\ref{sec:conclusion} we conclude the paper and discuss some
future research directions.

\section{The Algebraic Framework}%
\label{sec:algebraic}

In this section we introduce the algebraic framework that we use for
the color model, and we define the color visual cryptography schemes
we are considering.  We also provide a short review of the polynomial
approach for constructing basis matrices by Koga and Ishihara~%
\cite{KI11}.

\subsection{The Color Model}

We model the colors of the considered visual cryptography schemes as
elements of a finite \emph{commutative monoid}~$M$, which associates
to each pair of colors $(c_1, c_2) \in M \times M$ another color
$c_1 * c_2 \in M$ (we use multiplicative notation here in order to
facilitate the monoid ring later).  Specifically, we assume the
following properties for this operation:

\begin{enumerate}[\quad i)]
  \item The operation is commutative, i.e.
  \[ c_1 * c_2 \,=\, c_2 * c_1 \qquad \text{for all}~c_1, c_2 \in M . \]
  \item The operation is associative, i.e.
  \[ (c_1 * c_2) * c_3 \,=\, c_1 * (c_2 * c_3)
    \qquad \text{for all}~c_1, c_2, c_3 \in M . \]
\end{enumerate}
These properties i) and ii) ensure that the order in which the shared
colors $c_1, \dots, c_n$ are stacked is irrelevant, and we may write
$c_1 * \dots * c_n$ for the result.
\begin{enumerate}[\quad i)]\setcounter{enumi}2
  \item There is a neutral element $1 \in M$ such that
  \[ 1 * c \,=\, c \qquad \text{for all}~c \in M , \]
  which may be interpreted as the color “white”.
\end{enumerate}

Moreover, the commutative monoid~$M$ usually also fulfills the
following properties:
\begin{enumerate}[\quad i)]\setcounter{enumi}3
  \item There is an absorbing element $0 \in M$ such that
  \[ 0 * c \,=\, 0 \qquad \text{for all}~c \in M , \]
  which may be viewed as the color “black”.
  \item The operation is idempotent, i.e.
  \[ c * c \,=\, c \qquad \text{for all}~c \in M .  \]
\end{enumerate}
This last property corresponds to the non-darkening model, in which
stacking the same color multiple times does not darken its hue.  When
this condition is fulfilled also we speak of a \emph{semilattice}~%
\cite{KY98} and write $c_1 \wedge c_2$ for $c_1 * c_2$.  Here,
$c_1 \wedge c_2$ is called the infimum or meet of the semilattice.
(In fact, it is even a \emph{lattice}, i.e. for $c_1, c_2 \in M$ there
also exists the supremum or join $c_1 \vee c_2$.)

\begin{example}
  These are commutative monoids describing common color models.
  Each one fulfills the properties i) to v).
  \begin{enumerate}
  \item Let $M \defeq \{ 0, 1 \}$ with $c_1 \wedge c_2 \defeq c_1
    \text{ and } c_2$ (logical and).  This corresponds to the
    classical case of black and white visual cryptography, where~$0$
    is “black” and~$1$ is “white”.
  \item Let $M \defeq \{ 0, c_1, \dots, c_k, 1 \}$ where
    \[ c_p \wedge c_q \,\defeq\, \begin{cases} c_p &\text{if } p = q \\
        0 &\text{otherwise} \end{cases} \] as well as
    $c_j \wedge 1 = c_j$ and $c_j \wedge 0 = 0$ for all~$j$.  This
    corresponds to the model of Verheul and Tilborg~\cite{VT97}, with
    the slight variation that the result of different colors is black
    (rather than undefined).
  \item Let $M \defeq \{ 0, {\rm R}, {\rm G}, {\rm B}, {\rm Y}, {\rm M},
    {\rm C}, 1 \}$ be equipped with the operation as illustrated by
    the Hasse diagram in Fig.~\ref{fig:ymcrgb}, i.e. ${\rm Y} \wedge
    {\rm M} = {\rm R}$, ${\rm Y} \wedge {\rm C} = {\rm G}$ and
    ${\rm M} \wedge {\rm C} = {\rm B}$.  This models the colors
    Yellow, Magenta, Cyan which when combined produce the colors Red,
    Green, Blue.  In this work we will mainly consider this monoid or
    a substructure thereof.
    \begin{figure}
      \centering
      \begin{tikzpicture}[semithick, font=\footnotesize, shorten <=-1pt,
        shorten >=-1pt, x=1.25cm]
        \node (0) at (0, 0) { $0$ }; 
        \node (1) at (-1.5, 1.5) { $\rm R$ };
        \node (2) at (0, 1.5) { $\rm G$ };
        \node (3) at (1.5, 1.5) { $\rm B$ };
        \node (4) at (-1.5, 3) { $\rm Y$ };
        \node (5) at (0, 3) { $\rm M$ };
        \node (6)  at (1.5, 3) { $\rm C$ };
        \node (7) at (0, 4.5) { $1$ };
        \draw (0) -- (1);  \draw (0) -- (2);  \draw (0) -- (3);
        \draw (1) -- (4);  \draw (1) -- (5);  \draw (2) -- (4);
        \draw (2) -- (6);  \draw (3) -- (5);  \draw (3) -- (6);
        \draw (4) -- (7);  \draw (5) -- (7);  \draw (6) -- (7);
      \end{tikzpicture}
      \caption{A lattice for the colors Yellow, Magenta, Cyan, Red,
        Green, and Blue.}
      \label{fig:ymcrgb}
    \end{figure}
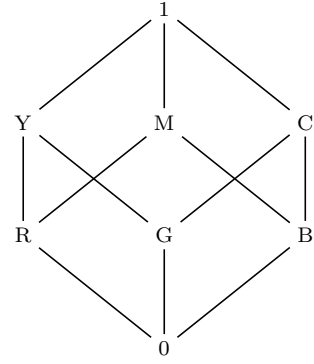
  \item Let $M \defeq \{ 0, c_1, \dots, c_k, 1 \}$ where the colors
    are linearly ordered so that $c_p \wedge c_q = c_p$ whenever
    $p \le q$.  This model describes a grayscale with shades of gray
    $c_1, \dots, c_k$ going from darker hue to brighter.
  \end{enumerate}
\end{example}

\subsection{Visual Cryptography Schemes}

Let us fix a finite monoid~$M = \{ c_0, c_1, \dots, c_k, c_{k+1} \}$
with commutative operation~$*$, neutral element $1 = c_{k+1}$ for
“white”, and assume an absorbing element $0 = c_0$ for “black” (we do
not require the monoid to be idempotent).  The elements of~$M$ model
the available colors, and $c_1 * c_2$ describes the result of stacking
the colors~$c_1$ and~$c_2$ together.  We now define the color visual
cryptography schemes by adapting the definition in~\cite[Def.~1]{KI11}.
As usual we use the notation $[n] \defeq \{ 1, \dots, n \}$.

\begin{definition}
  A \emph{visual cryptography scheme} for~$n$ participants and
  threshold~$t$ with pixel expansion~$q$ consists of $n \times q$-%
  matrices $S_1, \dots, S_{k+1}$ with entries in~$M$, such that the
  following properties are fulfilled.
  \begin{enumerate}[\quad i)]
  \item For every subset $A \subseteq [n]$ of cardinality $\vert A
    \vert \le t \!-\! 1$ the matrices $S_1[A], \dots, S_{k+1}[A]$ --
    consisting of the rows with index in~$A$ -- comprise the same
    columns up to permutation.
  \item For each $j = 1, \dots, k \!+\! 1$ and every $A \subseteq
    [n]$ of cardinality $\vert A \vert = t$ it holds that up to column
    permutation \[ S_j[A] \,=\, ( C_j \,\|\, Z_j \,\|\, R ) \] is the
    concatenation of matrices $C_j$, $Z_j$ and $R$ where
    \begin{itemize}
    \item for every column of~$C_j$ the product of entries (in the
      monoid~$M$) equals the color~$c_j$,
    \item each column of~$Z_j$ has a~$0$ (and thus product~$0$),
    \item the matrix~$R$ does not contain a~$0$ nor depend on~$j$.
    \end{itemize}
  \end{enumerate}    
\end{definition}

The matrices $S_1, \dots, S_{k+1}$ in the preceding definition are
referred to as basis matrices for the visual cryptography scheme.  To
distribute a pixel of color $c_j$ among the $n$ participants, the
$i$-th row of a random column permutation of $C_j$ is intended to be
used for the $i$-th share.  The first property i) is purely
combinatorial (it does not depend on the color model or any features
of the visual setup), and ensures that no fewer than~$t$ participants
are able to combine their shares to obtain any information about the
secret image.  On the other hand, property ii) enables every set
of~$t$ participants to recognize the color~$c_j$ by simply stacking
their shares together using the submatrix~$C_j$, while the column sums
of the submatrices~$Z_j$ and~$R$ produce black and random noise.

In a slight variation of this definition considered in~\cite{KI11},
one may ask to have the basis matrices for only a subset of the colors
$c_1, \dots, c_{k+1}$ (which may exclude $c_{k+1} = 1$).

\begin{example}
  Let us consider the submonoid of the lattice from Fig.~\ref{fig:ymcrgb}
  consisting of $M \defeq \{ 0, {\rm M}, {\rm C}, {\rm B}, 1 \}$ with
  ${\rm M} \wedge {\rm C} = {\rm B}$, meaning that Magenta and Cyan
  combine to Blue.

  The following are basis matrices for a simple visual cryptography
  scheme with $n = 2$ participants, threshold $t = 2$, and pixel
  expansion $q = 6$.
  \begin{gather*}
    S_1 \defeq \begin{bmatrix} \M \E \C \O \E \O \\
      \E \M \O \C \O \E \end{bmatrix} \qquad
    S_2 \defeq \begin{bmatrix} \C \E \M \O \E \O \\
      \E \C \O \M \O \E \end{bmatrix} \\
    S_3 \defeq \begin{bmatrix} \M \C \E \O \E \O \\
      \C \M \O \E \O \E \end{bmatrix} \qquad
    S_4 \defeq \begin{bmatrix} \E \E \M \O \C \O \\
      \E \E \O \M \O \C \end{bmatrix}
  \end{gather*}
\end{example}

\subsection{Description by Multivariate Polynomials}

When considering visual cryptography schemes, a common design goal is
to have reasonable small pixel expansion~$q$ while maximizing the
“contrast”, which is governed by the minimum over the sizes of all
$C_j$, those parts of basis matrices $S_j$ that are responsible for
identifying the colors.

A number of papers are devoted to construct schemes for black/white or
color visual cryptographic schemes optimizing these values, either by
structured constructions or by computer search, see, e.g., \cite{A14,
  DAR19, D+20, K22}.

In the present paper we focus on an elegant description of visual
cryptography schemes by means of multivariate polynomials
\cite[Sec.~3]{KI11}.  Each element of the color monoid
$M = \{ 0, c_1, \dots, c_k, 1 \}$ is considered an indeterminate,
where we denote the variable corresponding to $0 \in M$ (“black”)
as~$z$ and the variable corresponding to $1 \in M$ (“white”) as~$a$.

\begin{definition}
  A basis matrix~$S$ for~$n$ participants is encoded by a degree~$n$
  homogeneous polynomial~$p$ (with nonnegative integer coefficients)
  if each monomial $u x_1 \dots x_n$ is expanded to~$u$ copies of all
  possible $n!$ permutations of the column $[ x_1 \dots x_n ]^{\top}$.
  In this case~$p$ is referred to as the \emph{basis polynomial} for
  the basis matrix~$S$.
\end{definition}

\begin{figure}
  \begin{center}
    \medskip
    \raisebox{-.5\height}{\includegraphics[width=2cm]{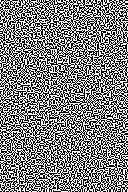}} $+$
    \raisebox{-.5\height}{\includegraphics[width=2cm]{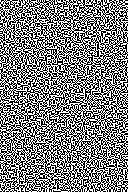}} $=$
    \raisebox{-.5\height}{\includegraphics[width=2cm]{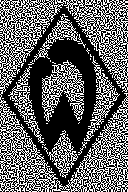}}
  \end{center}
  \caption{Example of a $(2, 2)$-threshold visual cryptography scheme.}
  \label{fig:svw}
\end{figure}

\begin{example}\label{exa:bw}
  Consider the degree~$2$ homogeneous polynomials
  \[ f_0 \defeq 2 a z \qquad\text{ and }\qquad f_1 \defeq a^2 + z^2 \,, \]
  with variables~$a$ for~$1$ and~$z$ for~$0$.  The corresponding
  basis matrices are
  \[ X_0 = \begin{bmatrix} 1 ~ 1 ~ 0 ~ 0 \\ 0 ~ 0 ~ 1 ~ 1 \end{bmatrix}
    \qquad\text{and}\qquad
    X_1 = \begin{bmatrix} 1 ~ 1 ~ 0 ~ 0 \\ 1 ~ 1 ~ 0 ~ 0 \end{bmatrix}. \]
  Since every column appears twice we can half the pixel expansion
  by using the reduced basis matrices
  \[ \tilde X_0 = \begin{bmatrix} 1 ~ 0 \\ 0 ~ 1 \end{bmatrix}
    \qquad\text{and}\qquad
    \tilde X_1 = \begin{bmatrix} 1 ~ 0 \\ 1 ~ 0 \end{bmatrix}. \]
  These matrices form the classical $(2, 2)$-threshold black and white
  visual cryptography scheme, see Fig.~\ref{fig:svw}.

  Note that the common derivative (see discussion below) of the
  polynomials is $\delta f_0 = 2 (a + z) = 2 a + 2 z = \delta f_1$.
\end{example}

\begin{example}
  The degree~$3$ homogeneous polynomial \[p \defeq m a^2 + c a z
    + y a z \] corresponds to the matrix
  \[ S = \begin{bmatrix}
      \M \E \E \M \E \E \, \C \C \E \E \O \O \, \Y \Y \E \E \O \O \\
      \E \M \E \E \M \E \, \E \O \C \O \C \E \, \E \O \Y \O \Y \E \\
      \E \E \M \E \E \M \, \O \E \O \C \E \C \, \O \E \O \Y \E \Y
    \end{bmatrix} . \]
\end{example}

For a matrix~$S$ encoded by a basis polynomial~$p$, the restriction
of~$S$ to any subset $A \subseteq [n]$ actually depends -- up to
column permutation -- only on the cardinality of~$A$.  Furthermore,
the degree~$\ell$ homogeneous polynomial~$p^{(\ell)}$ corresponding to
any subset $A \subseteq [n]$ of size~$\ell$ can be obtained from
applying the differential operator
\[ \delta \,\defeq\, d_a + d_{c_1} + \dots + d_{c_k} + d_z \]
multiple times to the polynomial~$p$, i.e.,
\[ p^{(\ell)} = \delta^{n-\ell} p \,. \]

By this observation the condition i) of a visual cryptographic scheme
can be succinctly expressed by requiring that $p^{(t - 1)} =
\delta^{n - t + 1} p_i$ be equal for all degree~$n$ homogeneous
polynomials~$p_i$ corresponding to~$S_i$.

\begin{example}
  For the above polynomial $p = m a^2 + c a z + y a z$ we have
  $\delta p = 2 m a + a^2 + c a + c z + a z + y a + y z + a z$,
  corresponding to the matrix
  \[ S[\{1, 2\}] = \begin{bmatrix}
      \M \E \E \M \E \E \C \C \E \E \O \O \Y \Y \E \E \O \O \\
      \E \M \E \E \M \E \E \O \C \O \C \E \E \O \Y \O \Y \E \\
    \end{bmatrix} . \]
\end{example}

\begin{figure}
  \begin{center}
    \medskip
    \raisebox{-.5\height}{\includegraphics[width=24mm]{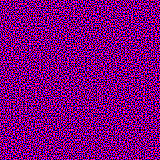}} $+$
    \raisebox{-.5\height}{\includegraphics[width=24mm]{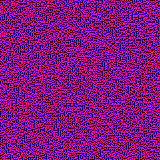}} $=$
    \raisebox{-.5\height}{\includegraphics[width=24mm]{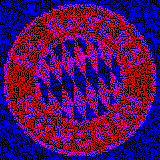}}
  \end{center}
  \caption{Example of a $(2, 2)$-threshold color visual cryptography scheme.}
  \label{fig:fcb}
\end{figure}

\begin{example}\label{exa:col}
  Take as color model the monoid $M = \{ 0, {\rm R}, {\rm B}, 1 \}$
  where ${\rm R} \wedge {\rm B} = 0$.   The basis matrices
  \[ S_{\rm R} \defeq \begin{bmatrix} \E \R \B \O \\
      \R \E \O \B \end{bmatrix} , ~
    S_{\rm B} \defeq \begin{bmatrix} \E \R \B \O \\
      \B \O \E \R \end{bmatrix} , ~
    S_{\rm 1} \defeq \begin{bmatrix} \E \R \B \O \\
      \E \R \B \O \end{bmatrix} \]
  satisfy property i) of a visual cryptography scheme for the colors
  ${\rm R}$, ${\rm B}$ and $1$, see Fig.~\ref{fig:fcb}.
  
  These can be encoded (after doubling the columns) by the basis
  polynomials $f_{\rm R} = 2 a r + 2 b z$, $f_{\rm B} = 2 a b + 2 r z$
  and $f_{\rm 1} = a^2 + r^2 + b^2 + z^2$.  Observe that their common
  derivative equals $2 (a + r + b + z)$.
\end{example}

Given a basis polynomial~$p$ for a visual cryptography scheme the
stacking process can be modeled by viewing~$p$ in the \emph{monoid
  ring}.  This means that the “variables” $c_0, \dots, c_{k+1}$ are
now considered as monoid elements, which are multiplied using the
monoid operation.  In this way, a monomial of the form $c_p c_q$ is
viewed as the monoid element $c_p * c_q$.

Looking at the black and white scheme of Example~\ref{exa:bw}, in the
monoid ring we have
\[ \tilde f_0 = 2 (a * z) = 2 z \quad\text{ and }\quad
  \tilde f_1 = a * a + z * z = a + z \,. \]
This describes the fact that stacking the shares for color~$0$
(“black”) gives all black, while for color~$1$ (“white”) we get half
white and half black.

Similarly, considering Example~\ref{exa:col} we obtain
\[ \tilde f_{\rm R} = 2 r + 2 z \,, \quad \tilde f_{\rm B} = 2 b + 2 z
  ~\text{ and }~ \tilde f_{\rm 1} = a + b + r + z \,, \]
describing the stacking of shares for colors $R$, $B$ and~$1$.

\section{Construction of Basis Matrices}%
\label{sec:construction}

We revisit the example constructions in \cite[Sec.~5]{KI11}.  There
the color monoid $\{ 0, \rm C, \rm Y, \rm M, 1 \}$ is employed, in
which the colors $\rm C$, $\rm Y$ and $\rm M$ are “independent” and
not related to each other, using only $1 * \rm X = \rm X$ and
$\rm X * \rm X = \rm X$ for some color~$\rm X$.  Our goal is to extend
these examples using the full color monoid $\{ 0, \rm Y, \rm M, \rm C,
\rm R, \rm G, \rm B, 1 \}$ (with $\rm Y * \rm M = \rm R$,
$\rm R * \rm G = 0$ and the like), or some submonoid thereof.

We aim to find alternative and generalizing constructions for some of
the following.  Here, the polynomial~$g_1$ corresponds to $f_{\rm M} -
f_{\rm C}$ and~$g_2$ corresponds to $f_{\rm M} - f_{\rm Y}$, where
$f_{\rm M}$, $f_{\rm C}$ and $f_{\rm Y}$ denote the respective basis
polynomials.  For a $(t, n)$-threshold scheme these choices satisfy
$\delta^{n - t - 1} g_1 = \delta^{n - t - 1} g_2 = 0$ (where~$\delta$
is the differential operator) in order to satisfy the security
property.

\begin{itemize}
\item In the $(n, n)$-threshold scheme the construction is based on
  the polynomials
  \[ g_1 = (m - c) (a - z)^{n-1} \text{ and }
    g_2 = (m - y) (a - z)^{n-1} \,. \]
\item Similarly, in the $(n \!-\! 1, n)$-threshold scheme it is used
  \[ g_1 = a (m - c) (a - z)^{n-2} \text{ and }
    g_2 = a (m - y) (a - z)^{n-2} \,. \]
\item For the $(2, n)$-threshold scheme the polynomials used are
  \begin{gather*}
    g_1 = (m - c) (a^{n-1} - z^{n-1}) \,,\\
    g_2 = (m - y) (a^{n-1} - z^{n-1}) \,.
  \end{gather*}
  A reduced pixel expansion is obtained by using “different
  permutation matrices”, by which repeated columns are contracted into
  one (similarly as in Example~\ref{exa:bw}).
\item Finally, the $(3, n)$-threshold scheme considers the polynomials
  \begin{gather*}
    g_1 = (m - c) \big( (n - 2) a^{n-1} - (n - 1) a^{n-2} z + z^{n-1} \big) , \\
    g_2 = (m - y) \big( (n - 2) a^{n-1} - (n - 1) a^{n-2} z + z^{n-1} \big) .
    \end{gather*}
\end{itemize}

The following result presents a construction for the color monoid
$\{ 0, \rm C, \rm M, \rm B, 1 \}$ where $\rm C * \rm M = \rm B$.

\begin{proposition}
  For any $n \ge 2$ there exists an $(n, n)$-threshold scheme for the
  color monoid $\{ 0, \rm C, \rm M, \rm B, 1 \}$, which is based on
  the polynomials
  \begin{gather*}
    g_1 = (m - a) (c - z) (a - z)^{n-2} \,, \\
    g_2 = (m - z) (c - a) (a - z)^{n-2} \,.
  \end{gather*}
\end{proposition}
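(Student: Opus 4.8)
The plan is to first read $g_1,g_2$ inside the monoid ring in order to recover which basis polynomials they encode. Reducing with ${\rm M}*{\rm C}={\rm B}$, $1*x=x$ and $0*x=0$ gives $\widetilde{(m-a)(c-z)}={\rm B}-{\rm C}$ and $\widetilde{(m-z)(c-a)}={\rm B}-{\rm M}$, while the trailing factor $(a-z)^{n-2}$ is harmless: its image $1-0$ is idempotent (since $1*0=0$) and the extra $0$-terms cancel, so $\tilde g_1={\rm B}-{\rm C}$ and $\tilde g_2={\rm B}-{\rm M}$. Hence $g_1=f_{\rm B}-f_{\rm C}$ and $g_2=f_{\rm B}-f_{\rm M}$: blue plays the role of the base color, and it suffices to fix one valid basis polynomial $f_{\rm B}$ and to set $f_{\rm C}\defeq f_{\rm B}-g_1$ and $f_{\rm M}\defeq f_{\rm B}-g_2$.

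Next I would exhibit $f_{\rm B}$ explicitly. The only $z$-free monomial of $g_1$ is the blue-revealing column $c\,m\,a^{n-2}$ (coefficient $+1$), whereas $c\,a^{n-1}$ occurs with coefficient $-1$; symmetrically for $g_2$, noting $g_2=g_1|_{c\leftrightarrow m}$. Choosing $f_{\rm B}$ whose unique $z$-free monomial is $c\,m\,a^{n-2}$ and whose remaining monomials all carry a $z$ — concretely, taking the $z$-part of $f_{\rm B}$ to be the coefficientwise maximum of the positive $z$-parts of $g_1$ and $g_2$ — the subtraction automatically turns the $-c\,a^{n-1}$ of $g_1$ into the cyan-revealing column $c\,a^{n-1}$ of $f_{\rm C}$ and leaves every other monomial divisible by $z$; likewise $f_{\rm M}=f_{\rm C}|_{c\leftrightarrow m}$ carries the magenta column $m\,a^{n-1}$. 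For instance this yields $f_{\rm B}=cm+az$ for $n=2$ and $f_{\rm B}=acm+a^2z+acz+amz+cz^2+mz^2$ for $n=3$. Finally one adjoins a white basis polynomial of the form $f_1=a^n+(\text{$z$-monomials})$.

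I would then verify the two scheme properties through the polynomial calculus. For security (property~i), since $t=n$ the condition is that $\delta^{n-t+1}=\delta$ annihilates the differences, and $\delta g_1=\delta g_2=0$ is immediate: $\delta$ is a derivation and kills each factor, as $\delta(a-z)=\delta(m-a)=\delta(c-z)=\delta(m-z)=\delta(c-a)=0$ (every factor is either $a-z$ or a difference of two variables). Hence $\delta f_{\rm C}=\delta f_{\rm M}=\delta f_{\rm B}$, and I would check $\delta f_1=\delta f_{\rm B}$ separately. For recovery (property~ii), I reduce each $\tilde f_j$ in the monoid ring: the single $z$-free column reduces to the intended color (this is $C_j$) and every remaining monomial contains a $z$, hence reduces to $0$ (this is $Z_j$), so that the shared part $R$ is empty.

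The hard part will be the achromatic, non-color-revealing structure. Once $f_{\rm B}$ dominates $g_1$ and $g_2$ coefficientwise the inequalities $f_{\rm C},f_{\rm M}\ge 0$ are routine, but fitting the white polynomial $f_1$ with the \emph{same} derivative is delicate: an all-white column $a^n$ forces the summand $n\,a^{n-1}$ into $\delta f_1$, so $\delta f_{\rm B}$ must already carry $a^{n-1}$-weight at least $n$. For the minimal $f_{\rm B}$ above this fails (e.g.\ for $n=3$ the $a^2$-weight is only $1$), so $f_{\rm B}$ has to be enlarged by suitable $z$-monomials — exactly as the paper's $n=2$ example uses $f_{\rm B}=cm+2az$ rather than $cm+az$ — which raises the pixel expansion and leaves a feasibility condition to be met for every $n$. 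Establishing this uniformly, while controlling the alternating binomial signs coming from $(a-z)^{n-2}$, is the step I expect to require the most care, most cleanly by an induction on $n$.
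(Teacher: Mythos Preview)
Your core argument tracks the paper's proof closely: you verify security by showing $\delta g_1=\delta g_2=0$ via the derivation rule applied to each linear factor, and you establish recovery by reading the polynomials in the monoid ring (the paper phrases this as setting $z=0$ to isolate the non-black columns, obtaining $f_{\rm B}(0)=mca^{n-2}$, $f_{\rm C}(0)=ca^{n-1}$, $f_{\rm M}(0)=ma^{n-1}$). Your explicit device of taking $f_{\rm B}$ as $mca^{n-2}$ plus the coefficientwise maximum of the positive $z$-parts of $g_1$ and $g_2$ is actually more careful than the paper, which leaves the nonnegativity of the $f_j$ implicit; your $n=2,3$ formulas for $f_{\rm B}$ coincide exactly with those in the example immediately following the proposition.

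The one substantive discrepancy is your insistence on a white basis polynomial $f_1$. The paper's proposition, as both its proof and the subsequent example make clear, invokes the ``slight variation'' announced just after Definition~1: the scheme supplies basis matrices only for the colors ${\rm C}$, ${\rm M}$, ${\rm B}$, not for~$1$. Your entire ``hard part'' is therefore self-imposed and simply vanishes. (The $n=2$ scheme you cite with $f_{\rm B}=cm+2az$ is the \emph{separate} example at the end of Section~\ref{sec:construction}, which does include white and accordingly has pixel expansion~$6$; the example attached to the proposition itself has $f_{\rm B}=mc+az$ and pixel expansion~$4$.) Once white is dropped, no induction and no further feasibility analysis is needed, and the argument you sketch for ${\rm B},{\rm C},{\rm M}$ is already complete.
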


\begin{proof}
  Using common rules for differentiation we obtain
  \[ \delta g_1 = (a - z)^{n-2} \delta (m - a) (c - z)
    + (m - a) (c - z) \delta (a - z)^{n-2} \]
  with $\delta (m - a) (c - z) = \delta (m c + a z - m z - c a)
  = m + c + a + z - m - z - c - a = 0$ and $\delta (a - z)^{n-2}
  = (n - 2) (a - z)^{n-3} \delta (a - z)$, where $\delta (a - z)
  = 1 - 1 = 0$.  Similarly, we compute $\delta g_2 = 0$, which
  establishes the security requirement.

  Now we may view~$g_1$ and~$g_2$ in the monoid ring in order to
  describe the stacking of shares for image recovery.  Since~$z$ is
  the absorbing element in the monoid, we obtain the non-black part by
  setting the variable~$z$ to~$0$.  This gives
  \[ g_1(0) = (m \!-\! a) c a^{n-2} \quad\text{ and }\quad
    g_2(0) = m (c \!-\! a) a^{n-2} \,, \]
  corresponding to basis submatrices encoded by
  \[ f_{\rm B}(0) = m c a^{n-2} \,, ~ f_{\rm C}(0) = c a^{n-1}
    \text{ and } f_{\rm 1}(0) = m a^{n-1} \,. \]
  Since $m * c = b$ we thus obtain the recovery property.
\end{proof}

\begin{example}
  In the case $n = 2$ we obtain $g_1 = (m - a) (c - z) = m c + a z
  - c a - m z$ and $g_2 = (m - z) (c - a) = m c + a z - m a - c z$
  (with $\delta g_1 = \delta g_2 = 0)$, so that
  \[ f_{\rm B} = m c + a z \,, \quad f_{\rm C} = c a + m z \,,
    \quad f_{\rm M} = m a + c z \,. \]
  The corresponding basis matrices are
  \[ S_{\rm B} = \begin{bmatrix} \M \C \E \O \\
      \C \M \O \E \end{bmatrix} , ~
    S_{\rm C} = \begin{bmatrix} \C \E \M \O \\
      \E \C \O \M \end{bmatrix} , ~
    S_{\rm M} = \begin{bmatrix} \M \E \C \O \\
      \E \M \O \C \end{bmatrix} , \]
  and the pixel expansion is only~$4$, compared to~$6$ in the
  construction of \cite[Thm.~3]{KI11}.
  
  For $n = 3$ we have $g_1 = (m - a) (c - z) (a - z) = m c a + a^2 z
  + c a z + m z^2 - c a^2 - m a z - m c z - a z^2$ and $g_2 = (m - z)
  (c - a) (a - z) = m c a + a^2 z + m a z + c z^2 - m a^2 - c a z - m
  c z - a z^2$, leading to the polynomials
  \begin{gather*}
    f_{\rm B} = m c a + a^2 z + c a z + m z^2 + m a z + c z^2 \,, \\
    f_{\rm C} = c a^2 + m a z + m c z + a z^2 + m a z + c z^2 \,, \\
    f_{\rm M} = m a^2 + c a z + m c z + a z^2 + c a z + m z^2 \,.
  \end{gather*}
\end{example}

We conclude with an example that also uses the color monoid
$\{ 0, \rm C, \rm M, \rm B, 1 \}$, but four basis matrices for each of
the colors ${\rm C}$, ${\rm M}$, ${\rm B}$ and~$1$ (“white”) are
provided.

\begin{example}
  If the colors $\rm C$, $\rm M$ and $\rm B$ are considered
  independent, an analogous result of \cite[Thm.~3]{KI11} produces the
  following $(2, 2)$-threshold scheme.  Here, employing the polynomials
  $g_1 = (m - c) (a - z) = m a + c z - m z - c a$, $g_2 = (m - b)
  (a - z) = m a + b z - m z - b a$ and $g_3 = (m - a) (a - z) =
  m a + a z - m z - a^2$ results in
  \begin{gather*}
    f_{\rm M} = m a + c z + b z + a z \,, \\
    f_{\rm C} = c a + m z + b z + a z \,, \\
    f_{\rm B} = b a + m z + c z + a z \,, \\
    f_{\rm 1} = a^2 + m z + c z + b z \,.
  \end{gather*}
  The corresponding basis matrices have a pixel expansion of~$8$.

  On the other hand, making use of the property $\rm C * \rm M = \rm B$
  in the color monoid, we can devise an alternative $(2, 2)$-threshold
  scheme es follows.  Indeed, consider the polynomials $g_1 = (m - z)
  (c - a) = m c + a z - m z - c a$, $g_2 = (m - z) (c - a) = m c + a z
  - c z - m a$ and $g_3 = (m - z) (c - z) + (z - a) (a - z) = m c + a z
  + a z - m z - c z - a^2$, which lead to
  \begin{gather*}
    f_{\rm B} = m c + a z + a z \,, \\
    f_{\rm C} = c a + m z + a z \,, \\
    f_{\rm M} = m a + c z + a z \,, \\
    f_{\rm 1} = a^2 + m z + c z \,.
  \end{gather*}
  The corresponding basis matrices are
  \begin{gather*}
    S_{\rm B} = \begin{bmatrix} \M \C \E \O \E \O \\
      \C \M \O \E \O \E \end{bmatrix} , \\
    S_{\rm C} = \begin{bmatrix} \C \E \M \O \E \O \\
      \E \C \O \M \O \E \end{bmatrix} , \\
    S_{\rm M} = \begin{bmatrix} \M \E \C \O \E \O \\
      \E \M \O \C \O \E \end{bmatrix} , \\
    S_{\rm 1} = \begin{bmatrix} \E \E \M \O \C \O \\
      \E \E \O \M \O \C \end{bmatrix} ,
  \end{gather*}
  and they have a smaller pixel expansion of just~$6$.
\end{example} \newpage

\section{Conclusion and Outlook}\label{sec:conclusion}

We study visual cryptography schemes with general color models.
Multivariate polynomial rings and derivations are an elegant method
for obtaining basis matrices.  The monoid ring is a suitable device
for describing the visual recovery.

The schemes of Koga and Ishihara do mostly not make full use of the
monoid operation.  We have adopted their method to strive for a better
image reconstruction.  Modeling colors as (non-idempotent) monoid
enables to also capture gray scale behavior.  The design of a good
scheme encorporating this idea is left as further research.

\end{document}